\newtheorem{thm}{Theorem}
\newtheorem{prop}{Proposition}
\newtheorem{definition}{Definition}
\newtheorem{cor}{Corollary}
\newtheorem{rem}{Remark}
\DeclareMathOperator*{\argmin}{arg\,min}
\DeclareMathOperator*{\rank}{rank}
\DeclareMathOperator*{\im}{im}
\DeclareMathOperator*{\F}{F}
\renewcommand{\Re}{\mathbb{R}}
\renewcommand{\paragraph}[1]{\smallskip\noindent\textbf{#1.} }
\newcommand{\BM}{\begin{bmatrix}}
\newcommand{\EM}{\end{bmatrix}}
\newcommand{\BBM}{\big[\begin{matrix}}
\newcommand{\EEM}{\end{matrix}\big]}
\newcommand{\bbm}{[\begin{matrix}}
\newcommand{\eem}{\end{matrix}]}
\title{On a class of optimization-based robust estimators}
\author{Laurent Bako
\thanks{L. Bako is with Laboratoire Amp\`{e}re -- Ecole Centrale de Lyon -- Universit\'{e} de Lyon -- 36 avenue Guy de Collongue, 69134 Ecully, France -- {\tt E-mail: laurent.bako@ec-lyon.fr}}
}
\begin{document}
\maketitle

\begin{abstract}
We consider in this paper the problem of estimating a parameter matrix from observations which are affected by two types of noise components: (i) a sparse noise sequence which, whenever nonzero can have arbitrarily large amplitude (ii) and a dense and bounded noise sequence of "moderate"  amount. This is termed a robust regression problem. To tackle it, a quite general optimization-based  framework is proposed and analyzed.  When only the sparse noise is present, a sufficient  bound is derived on the number  of nonzero elements in the sparse noise sequence that can be accommodated by the estimator while still returning the true parameter matrix. While almost all the  restricted isometry-based bounds from the literature are not verifiable, our bound can be easily computed through solving a convex optimization problem. Moreover, empirical evidence tends to suggest that it is generally tight. If in addition to the sparse noise sequence, the training data are affected by a bounded dense noise, we derive an upper bound on the estimation error. 
\end{abstract}



\section{Introduction}
In many engineering fields such as control system design, signal processing, machine learning or statistics, one is frequently confronted with the problem of empirically uncovering a mathematical relationship between a number of signals of interest. The usual method to achieve this goal is to run an experiment during which one measures (a finite number of) samples of the relevant signals and proceed with fitting a certain model structure to the experimental data samples. This process is known as system identification \cite{LjungBook,Soderstrom-Book89}.  A issue of critical importance during this process is that the experimental data samples might be contaminated by a measurement noise of relatively high level due for example to intermittent sensor failures or various communication disruptions. To cope with the troublesome effects of the noise, the model estimation must be designed with care. \\
In this paper we consider the situation where the data are corrupted by two types of noise: a sparse noise sequence which shows up only intermittently in time but can take on arbitrarily large values whenever it is nonzero; and a more standard dense noise component of moderate amount.   
 

\section{The robust regression problem}
Consider a system described by an equation of the form
\begin{equation}\label{eq:model-for-robust-estimation}
	y_t=A^o x_t +f_t+e_t
\end{equation}
where $y_t\in \Re^m$ and  $x_t\in \Re^n$ are respectively the output and the regressor vector at time $t$; $A^o\in \Re^{m\times n}$ is an unknown parameter matrix; $f_t$ and $e_t$ are some noise terms which are  unobserved. 

\paragraph{Problem}
Given a finite collection $\left\{x_t,y_t\right\}_{t=1}^N$ of measurements obeying the relation \eqref{eq:model-for-robust-estimation}, the robust regression problem of interest here is the one of finding an estimate of the parameter matrix $A^o$ under the assumptions that $\left\{e_t\right\}$ and $\left\{f_t\right\}$ are unknown but enjoy the following (informal) properties: 
\begin{itemize}
	\item $\left\{e_t\right\}$ is a  dense noise sequence with bounded elements accounting for moderate model mismatches or measurement noise. 
	\item $\left\{f_t\right\}$ is such that the majority of its elements are equal to zero while the remaining  nonzero elements can be of arbitrarily large magnitude. The nonzero elements of that sequence are usually termed gross errors or outliers. They can  account for possible intermittent sensor faults. We will refer to $\left\{f_t\right\}$ as the sequence of sparse noise.   
\end{itemize}
\noindent For the time being, these are just informal descriptions of the characteristics of the sequences $\left\{f_t\right\}$ and $\left\{e_t\right\}$. They will be made more precise whenever necessary in the sequel for the need of stating more formal results. 


\noindent Let $Y\in \Re^{m\times N}$ and $X\in \Re^{n\times N}$ be data matrices formed respectively with $N$ output measurements and regressor vectors. Then it follows from \eqref{eq:model-for-robust-estimation} that 
\begin{equation}\label{eq:model}
	Y=A^oX+E+F, 
\end{equation}
where  $E\in \Re^{m\times N}$ and $F\in \Re^{m\times N}$ are unknown noise components. The matrices $Y$ and $X$ can be structured or not,  depending on whether the system \eqref{eq:model-for-robust-estimation} is dynamic or not. For example, when the model \eqref{eq:model-for-robust-estimation} is of   MIMO FIR type, $Y$ contains a finite collection of output measurements while $X$ is a Hankel matrix containing  lagged inputs of the system. In this case $Y$ and $X$ take the form 
$$
\begin{aligned}
	&Y = \BM y_1 & y_2& \cdots & y_N \EM, \\
	& X = \BM u_1 & u_2 & \cdots& u_N\\
	u_0 & u_1 & \cdots & u_{N-1}\\
	\vdots & \vdots & \cdots & \vdots\\
	u_{1-n_f} & u_{2-n_f} & \cdots & u_{N-n_f}\EM.
\end{aligned}
$$ 
where $\left\{u_t\right\}$ and $\left\{y_t\right\}$ stand respectively for the input and output of the system and  the maximum lag $n_f$ is called the order of the model. In the sequel, the notations of the type $y_t$ and $x_t$ with subindex $t\in \mathbb{I}\triangleq\left\{1,\ldots,N\right\}$ refer to the columns of $Y$ and $X$ respectively.


\paragraph{Relevant prior works}
The so formulated regression problem is called a robust regression problem in connection with the fact that the error matrix $F$ assume columns of (possibly) arbitrarily large amplitude. It has applications in e.g., the identification of switched linear systems \cite{Bako11-Automatica,Ozay12-TAC,Ozay11-CDC-ECC}, subspace clustering \cite{Bako14-SPL}, etc. 
Existing approaches for solving the robust regression problem can be roughly divided into two groups: methods from the field of robust statistics \cite{Rousseeuw05-Book,Maronna06-Book,Huber-Book-09} which have been developed since the early 60s and a class of more recent methods inspired by the compressed sensing paradigm \cite{Bako16-Automatica,Candes06-IT,Sharon09-ACC,Xu14-Automatica,Mitra13-SP}. 
The first group comprises methods such as the least absolute deviation (LAD) estimator \cite{Huber87-L1}, the least median of squares \cite{Rousseeuw84}, the least trimmed squares \cite{Rousseeuw05-Book}, the family of M-estimators \cite{Huber-Book-09}.  The latter group can be viewed essentially as a refreshed look at the so-called least absolute deviation method. There has been however a fundamental shift of philosophy in  the analysis. While in the framework of robust statistics, robustness of an estimator is measured in terms of the breakdown point (the asymptotic minimum proportion of points which cause the estimation error induced by an 
estimator to be unbounded if they were to be arbitrarily corrupted
by gross errors), in the compressed-sensing-inspired category of robust methods, the analysis aims generally at characterizing properties of the data that favor exact recovery of the true parameter matrix $A^o$.  In this latter group, the LAD estimator is sometimes regarded as a convex relaxation of a combinatorial sparse optimization problem.

To the best of our knowledge, only the papers \cite{Sharon09-ACC} provides an explicit bound on the estimation error induced by the LAD estimator. However that bound does not fully apply to the current setting since the estimators although similar are of different natures. Indeed, the LAD  estimator stands only as a special case of the current framework. Moreover the bound in \cite{Sharon09-ACC} is not easily computable while ours is. The references \cite{Candes06-IT} and \cite{Mitra13-SP} provide some bounds for a noise-aware version of the LAD estimator which are based respectively on the Restricted Isometry Property (RIP) and a measure of subspace angles. Unfortunately numerical evaluation of those bounds is a process of exponential complexity, a price that is unaffordable in practice. 

A related but  different problem from the regression problem considered here is that of sparse signal recovery studied in the field of compressed sensing \cite{Candes08-SPM,Foucart13-Book}. This is about finding the sparsest solution to an underdetermined set of linear equations. Various analysis approaches have been devised which rely on the RIP constant, the mutual coherence, the nullspace property, to name but a few. Again, these analysis results either cannot be extended efficiently to the robust regression problem or lead to bounds that are NP-hard to compute \cite{Tillmann14-IT,Juditsky11-MP,Donoho06-IT}.

\paragraph{Contributions}
In this paper we propose and analyze a class of optimization-based robust estimators. It is shown that the robust properties of the estimators are essentially inherited from a key property of the to-be-optimized performance function (or loss function) called column-wise summability. The proposed framework admits the LAD estimator and its usual variants as special cases. Moreover it applies to both SISO and MIMO systems.  
When the dense noise component $E$ in \eqref{eq:model} is identically equal to zero, we derive  bounds on the number of gross errors (nonzero columns of $F$) that the estimator is able to accommodate while still returning the true parameter matrix $A^o$. In comparison with the existing literature, the proposed bounds have the important advantage that they are numerically computable through convex optimization. When both $E$ and $F$ are active,  exact recovery of the true parameter matrix is no longer possible. 
In this scenario, we derive upper bounds on the parametric estimation error in function of the amplitude of $E$ and the number of nonzero columns of $F$. Again, computable but (possibly) looser versions of those bounds are obtainable.  

The current paper can be viewed as a generalization of our previous work reported in \cite{Bako16-Automatica}. While \cite{Bako16-Automatica} provides an analysis of mostly a single estimator (namely the LAD estimator) relying on nonsmooth optimization theory, we focus here on a much larger class of optimization-based robust estimators by highlighting some key robustness-inducing properties. Moreover, we provide, for the considered class of estimators, stability results which permit the estimation of parametric error bounds.  

\paragraph{Outline}The rest of the paper is organized as follows. Section \ref{sec:nonsmooth} defines the optimization-based approach to the robust regression problem. Section \ref{sec:analysis} discusses the properties of the proposed estimation framework.  Section \ref{sec:special-case} provides further comments. 
Section \ref{sec:experiments} reports some numerical experiments. Lastly, Section \ref{sec:conclusion} contains some concluding remarks.

\paragraph{Notations} 
$\Re$ is the set of real numbers;  $\Re_{\geq 0}$ (respectively $\Re_{> 0}$) is the set of nonnegative (respectively positive) real numbers ; $\Re^N$ is the space of $N$-tuples (vectors) of real numbers. 
For any vector $x=[\begin{matrix}x_1 & \cdots & x_N \end{matrix}]^\top\in \Re^N$, the $p$-norm of $x$ with $p\in \left\{1,\ldots,\infty\right\}$  is defined by
$\left\|x\right\|_p=\big(\sum_{i=1}^N \left|x_i\right|^p\big)^{1/p}$. A special case is the limit case $p=\infty$ in which  $\left\|x\right\|_\infty=\max_{i=1,\ldots,N}\left|x_i\right|$. For any matrix $A=[\begin{matrix}a_1 & \cdots & a_N \end{matrix}]$ with $a_i\in \Re^m$, the induced $p$-norm of $A$ is defined by $\left\|A\right\|_p=\sup_{x\in \Re^N,\left\|x\right\|_p=1}\left\|Ax\right\|_p$. \\
\textit{Cardinality of a finite set.} Throughout the paper, whenever $\mathcal{S}$ is a finite set, the notation $\left|\mathcal{S}\right|$ will refer to the cardinality of $\mathcal{S}$. However, for a real number $x$, $\left|x\right|$ will denote the absolute value of $x$.\\
\textit{Submatrices and subvectors.}
Let $X \in \Re^{n\times N}$ and  $\mathbb{I}=\left\{1,\ldots,N\right\}$ be the index set for the columns of $X$. If $I\subset \mathbb{I}$, the notation $X_I$ denotes a matrix in $\Re^{n\times \left|I\right|}$ formed with the columns of $X$ indexed by $I$.
We will use the convention that $X_I=0\in \Re^n$  when the  index set $I$ is empty.

\section{A class of robust estimators}\label{sec:nonsmooth}
Let $\mathcal{D}_N$ be the set of $N$ data points generated by system  \eqref{eq:model-for-robust-estimation} for any possible values of the noise sequences,  i.e., 
$$\begin{aligned}
	\mathcal{D}_N=\Big\{(Y,X)&\in \Re^{m\times N}\times \Re^{n\times N}:\Big.  \\ 
	&\left.\exists (E,F)\in \mathcal{G}_N^e\times \mathcal{G}_N^f, \: \eqref{eq:model} \: \mbox{holds}\right\},
\end{aligned}$$
with $\mathcal{G}_N^e\subset \Re^{m\times N}$ and $\mathcal{G}_N^f\subset \Re^{m\times N}$ denoting the set of dense and sparse noise matrices  respectively. 
The estimation problem aims at  determining the unknown parameter matrix $A^o$ given a point $(Y,X)$ in $\mathcal{D}_N$. Of course, this quest would not make much sense if the noises $E$ and $F$ were completely arbitrary since in this case, we would have $\mathcal{D}_N=\Re^{m\times N}\times \Re^{n\times N}$ hence losing any informativity concerning the data-generating system. Therefore some minimum constraints need to be put on $E$ and $F$ as informally described above. \\  
With respect to the estimation problem just stated, an estimator is a set-valued map $\Psi:\mathcal{D}_N\rightarrow \mathscr{P}(\Re^{m\times n})$, $(Y,X)\mapsto \Psi(Y,X)$ which is defined from the data space $\mathcal{D}_N$ to the power set $\mathscr{P}(\Re^{m\times n})$ of the parameter space.  For $(Y,X)$ generated by a system of the form \eqref{eq:model-for-robust-estimation}, one would like to design an estimator achieving,  whenever possible, the ideal property that $\Psi(Y,X)=\left\{A^o\right\}$. In default of that ideal situation, a more pragmatic goal is to search for a $\Psi$ so that $A^o\in \Psi(Y,X)$ and $\Psi(Y,X)$ is of small size in some sense despite the troublesome effects of the unknown noise components $E$ and $F$. The design of an optimal estimator requires specifying a performance index (usually called a loss function) which is to be minimized.  

In this paper, we study the properties of the estimator of the parameter matrix $A^o$  in \eqref{eq:model} defined by 
\begin{equation}\label{eq:Psi-YX}
	\Psi(Y,X)= \argmin_{A\in \Re^{m\times n}}\varphi(Y-AX)
\end{equation}
where $\varphi: \mathcal{M}\left(\Re\right)\rightarrow \Re_{\geq 0}$ is a \textit{convex function} defined on the set $\mathcal{M}\left(\Re\right)$  of (all) real matrices. It is assumed that $\varphi$ has the following properties:
\begin{itemize}
	\item[\textbf{P1.}] For all $B, C \in \mathcal{M}\left(\Re\right)$ of compatible dimensions, 
\begin{equation}\label{eq:column-summability}
	\varphi(\bbm B & C\eem)= \varphi(B)+\varphi(C)
\end{equation}
with $\bbm B & C\eem$ denoting the matrix formed by concatenating  column-wise $B$ and $C$. 

\item[\textbf{P2.}] There exists a matrix norm $\ell: \mathcal{M}\left(\Re\right)\rightarrow \Re_{\geq 0}$ such that for all $B,C\in \mathcal{M}\left(\Re\right)$, conformable for addition,
\begin{equation}
	 \varphi(B)\leq \varphi(B-C)+\ell(C)\label{eq:Psi-Ineq1}
\end{equation}
	\item[\textbf{P3.}]There exists a constant real number $\varepsilon\geq 0$  such that for all $B\in \mathcal{M}\left(\Re\right)$ with $n$ rows and $N$ columns, 
\begin{equation}	
\ell(B)-\left|I_\varepsilon^c(B)\right|\varepsilon \leq \varphi(B)\leq \ell(B) \label{eq:Psi-Ineq2}
\end{equation}
where  
$$I_\varepsilon^c(B)= \big\{i\in \left\{1,\ldots,N\right\}: \ell(b_i)>\varepsilon \big\}$$
and $\left|I_\varepsilon^c(B)\right|$ is the cardinality of $I_\varepsilon^c(B)$ and $b_i\in \Re^n$ is the $i$th column of the $(n,N)$-matrix $B$. 
\end{itemize}
\noindent The property \eqref{eq:column-summability} will be  called column-wise summability. Since $\varphi$ is a function defined over the space of real matrices of any dimensions, it is also defined for $n$-dimensional vectors of real numbers. 
Hence according to property \eqref{eq:column-summability}, if $B=\bbm b_1 & \cdots & b_N\eem$ with column vectors $b_i\in \Re^n$, then  
$$\varphi(B)=\sum_{i=1}^N \varphi(b_i). $$
The so-defined function $\varphi$ is not necessarily a norm. For any $\varepsilon^o\geq 0$ and any vector norm $\ell^o$, it can be verified that the function $\varphi$ defined by 
\begin{equation}\label{eq:phi-example}
	\varphi(B)=\sum_{i=1}^N\max(0,\ell^o(b_i)-\varepsilon^o) 
\end{equation}
is  positive and convex and satisfies properties \eqref{eq:column-summability}-\eqref{eq:Psi-Ineq2} but it is not a norm for $\varepsilon^o>0$ since in this case, $\varphi(B)=0$ does not imply that $B=0$. But if $\varepsilon^o=0$ in \eqref{eq:phi-example}, then $\varphi=\ell$ by \eqref{eq:Psi-Ineq2} so that $\varphi$ corresponds to the matrix norm defined by $\varphi(B)=\sum_{i=1}^N\ell^o(b_i)$. We note in this latter case that \eqref{eq:Psi-Ineq2} is trivial while \eqref{eq:Psi-Ineq1} reduces to the triangle inequality.

We will show in the sequel that the estimator $\Psi$ in \eqref{eq:Psi-YX} enjoys some impressive robustness properties with respect to the sparse noise matrix $F$.  The term sparse is used here to mean that a relatively large proportion of the column vectors of $F$ are equal to zero. And saying that $\Psi$ is robust with respect to $F$ means that $\Psi(Y,X)$ does not depend on (or is insensitive to) the magnitudes of the nonzero columns of $F$ under the sparsity condition.  Therefore those few columns which are nonzero  can have arbitrarily large magnitude.
As will be shown in the sequel, the robustness properties of $\Psi$ are inherited from the properties P1-P3 of the objective function $\varphi$. In the special case where $\varphi$ is a norm, the properties P2-P3 are automatically satisfied so that P1 becomes the only key property required.   As to the convexity of $\varphi$, it is intended just for computational reasons as it eases the solving of the optimization problem in \eqref{eq:Psi-YX}.

\section{Properties of the robust estimators}\label{sec:analysis}

\subsection{Exact recoverability}
We first study the conditions under which the true parameter matrix $A^o$ in \eqref{eq:model-for-robust-estimation} can be exactly recovered. Theorem \ref{cor:NS-Conds} and Theorem \ref{thm:sufficient-condition-uniqueness} stated next show that if the number of nonzero columns in the matrix $V\triangleq E+F$ is  less than a certain threshold,  then $\Psi(Y,X)=\left\{A^o\right\}$.

\begin{thm}[A necessary and sufficient condition] \label{cor:NS-Conds}
Let $\varphi$ be a function satisfying  \eqref{eq:column-summability}-\eqref{eq:Psi-Ineq2} with $\varepsilon=0$ and $\Psi$ be defined as in \eqref{eq:Psi-YX}.  
Let $d$ be an integer and assume that $\rank(X)=n$. For any $A\in \Re^{m\times n}$ and $Y\in \Re^{m\times N}$, let $\mathbb{I}^c\left(Y-AX\right)=\left\{t\in \mathbb{I}: y_t-Ax_t\neq 0\right\}$.  Then the following statements are equivalent.
\begin{enumerate}
\itemsep=.05cm
\item[\emph{(i)}]   
\begin{equation}\label{eq:(i)}
\begin{aligned}
		\forall  A\in \Re^{m\times n}, \forall Y\in \Re^{m\times N},& \: \left|\mathbb{I}^c\left(Y-AX\right)\right|\leq  d \\
		                           &\quad \Rightarrow \quad \quad  \Psi(Y,X)=\big\{A\big\}
\end{aligned}
\end{equation}
\item[\emph{(ii)}]
	\begin{equation}\label{eq:(ii)}
\max_{\substack{I^c\subset \mathbb{I}:\\\left|I^c\right|=d}}\: \: \max_{\substack{\Lambda\in \Re^{m\times n}\\\Lambda\neq 0}}\left[\dfrac{\varphi(\Lambda X_{I^c})}{\varphi(\Lambda X)}\right]<\dfrac{1}{2}
\end{equation}
\end{enumerate}
Here and in the following, the notation $\mathbb{I}\triangleq\left\{1,\ldots,N\right\}$ is used to denote the index set for the  columns of the data matrices.
\end{thm}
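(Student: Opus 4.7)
The plan is to exploit the fact that, with $\varepsilon=0$, property P3 forces $\varphi=\ell$, so $\varphi$ becomes a matrix norm and property P2 collapses to the standard triangle inequality. The structural ingredient that truly drives everything is then P1 (column-wise summability). The rank assumption $\rank(X)=n$ is used to guarantee that $\Lambda X\neq 0$ whenever $\Lambda\neq 0$, so that the denominator in \eqref{eq:(ii)} never vanishes and $\varphi(\Lambda X)>0$.

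The key reformulation is to pass to the increment $\Lambda=A'-A$: writing $V\triangleq Y-AX$, the statement $\Psi(Y,X)=\{A\}$ is equivalent to $\varphi(V-\Lambda X)>\varphi(V)$ for every $\Lambda\neq 0$. Both implications will be proved in this form.

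For (ii)$\Rightarrow$(i): let $V$ satisfy $|\mathbb{I}^c(V)|\le d$, let $J=\mathbb{I}^c(V)$, and pad $J$ to an index set $I^c$ with $|I^c|=d$; set $I=\mathbb{I}\setminus I^c$ so that $V_I=0$. Column-wise summability splits
$$\varphi(V-\Lambda X)=\varphi(-\Lambda X_I)+\varphi(V_{I^c}-\Lambda X_{I^c}),$$
and the triangle inequality on the second summand then yields
$$\varphi(V-\Lambda X)\ge \varphi(\Lambda X_I)+\varphi(V_{I^c})-\varphi(\Lambda X_{I^c})=\varphi(V)+\varphi(\Lambda X)-2\varphi(\Lambda X_{I^c}).$$
By \eqref{eq:(ii)} combined with $\rank(X)=n$, the last difference is strictly positive, giving the required strict inequality and hence uniqueness.

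For (i)$\Rightarrow$(ii), I argue contrapositively. Suppose \eqref{eq:(ii)} fails: there exist $I^c$ with $|I^c|=d$ and $\Lambda\neq 0$ such that $2\varphi(\Lambda X_{I^c})\ge \varphi(\Lambda X)$. Construct a counterexample by defining $V$ columnwise via $V_{I^c}=\Lambda X_{I^c}$ and $V_I=0$, and taking $Y=AX+V$ for any $A$. Then $|\mathbb{I}^c(V)|\le d$, yet a direct computation using column-summability gives
$$\varphi(V-\Lambda X)=\varphi(-\Lambda X_I)+\varphi(V_{I^c}-\Lambda X_{I^c})=\varphi(\Lambda X_I)=\varphi(\Lambda X)-\varphi(\Lambda X_{I^c})\le \varphi(\Lambda X_{I^c})=\varphi(V),$$
so $A+\Lambda\neq A$ is also a minimizer, contradicting \eqref{eq:(i)}.

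The only real bookkeeping obstacle is reconciling the inequality $|\mathbb{I}^c(V)|\le d$ in (i) with the equality constraint $|I^c|=d$ in (ii). This is resolved by the monotonicity observation that enlarging $I^c$ can only increase $\varphi(\Lambda X_{I^c})/\varphi(\Lambda X)$ (again by column-summability and non-negativity of $\varphi$), so if the ratio is $<1/2$ for all $I^c$ of size exactly $d$, it is automatically $<1/2$ for all smaller index sets; this legitimizes the padding used above.
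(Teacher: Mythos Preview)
Your proof is correct and follows essentially the same route as the paper's. The only cosmetic differences are that you argue (i)$\Rightarrow$(ii) by contrapositive (the paper does it directly, but with the mirror-image construction $Y_{I^c}=0$, $Y_{I^0}=\Lambda X_{I^0}$ and $A=\Lambda$), and you make the identification $\varphi=\ell$ explicit up front whereas the paper invokes P2 in its general form; the padding/monotonicity step for reconciling $|I^c|=d$ versus $|I^c|\le d$ is handled identically in both.
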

\begin{proof}
We first note that the rank assumption on $X$ is intended to insure that \eqref{eq:(ii)} is well-defined since then, with $\varphi$ being a norm, $\varphi(\Lambda X)\neq 0$ whenever $\Lambda\neq 0$. \\
(i) $\Rightarrow$ (ii): Assume that (i) holds.\\
Consider an arbitrary subset $I^c$ of $\mathbb{I}$ such that $\left|I^c\right|= d$. Let $\Lambda$\ be any matrix in $\Re^{m\times n}$ satisfying $\Lambda\neq 0$. Finally, consider a matrix $Y\in \Re^{m\times N}$ defined by $Y_{I^c}=0$ and $Y_{I^0}=\Lambda X_{I^0}$ where $I^0=\mathbb{I}\setminus I^c$. Then  $\mathbb{I}^c(Y-\Lambda X)\subset I^c  $ and so $\left|\mathbb{I}^c(Y-\Lambda X)\right|\leq d$. 
Hence by (i) 
$\left\{\Lambda\right\}=\argmin_{H}\varphi(Y-HX)$ which means that $\varphi(Y-\Lambda X)<\varphi(Y-HX)$ for any $H\in \Re^{m\times n}$, $H\neq \Lambda$. In particular, by taking $H=0$ we get $\varphi(Y-\Lambda X)<\varphi(Y)$. It follows from the property \eqref{eq:column-summability} that $$\varphi(Y_{I^c}-\Lambda X_{I^c})+\varphi(Y_{I^0}-\Lambda X_{I^0}) <\varphi(Y_{I^c})+\varphi(Y_{I^0}).$$
Using now the relations $Y_{I^c}=0$ and $Y_{I^0}=\Lambda X_{I^0}$ yields $\varphi(\Lambda X_{I^c})<\varphi(\Lambda X_{I^0})$ or, equivalently, 
$\varphi(\Lambda X_{I^c})  <1/2\varphi(\Lambda X)$. 
Eq. \eqref{eq:(ii)} then follows from the fact that $I^c$ and $\Lambda$ are arbitrary.

\noindent (ii) $\Rightarrow$ (i): 
To begin with, note that if Eq. \eqref{eq:(ii)} holds for some $d$, then it holds also for any $d_0\leq d$. As a result, the equality $|I^c|=d$ in \eqref{eq:(ii)} can be changed to $|I^c|\leq d$.  Assuming (ii), let $A\in \Re^{m\times n}$ and $Y\in \Re^{m\times N}$ be matrices satisfying $\left|\mathbb{I}^c(Y-A X)\right|\leq d$. Set $I^c=\mathbb{I}^c(Y-A X)$ and $I^0=\mathbb{I}\setminus I^c$. Then for all $\Lambda\in \Re^{m\times n}$ such that $\Lambda \neq 0$, 
$$2\varphi(\Lambda X_{I^c})< \varphi(\Lambda X) =\varphi(\Lambda X_{I^c})+ \varphi(\Lambda X_{I^0}),$$
where the equality is obtained by the property \eqref{eq:column-summability} of $\varphi$.
It follows  that
\begin{equation}\label{eq:Inter1}
	\varphi(\Lambda X_{I^c})< \varphi(Y_{I^0}-(A+\Lambda) X_{I^0}). 
\end{equation}
On the other hand, we know by \eqref{eq:Psi-Ineq1} that
$$\begin{aligned}
	\varphi(Y_{I^c}-A X_{I^c})-&\varphi(Y_{I^c}-(A+\Lambda) X_{I^c}) \leq \varphi(\Lambda X_{I^c}).
\end{aligned} $$
Combining with the inequality \eqref{eq:Inter1} yields
$$\varphi(Y-AX)< \varphi(Y-(A+\Lambda)X).$$
Since $\Lambda$ is an arbitrary nonzero matrix, this inequality says that $A$ is the unique minimizer of $V(H)=\varphi(Y-HX)$. 
\end{proof}
\noindent Consider a data pair $(Y,X)$ generated by \eqref{eq:model-for-robust-estimation}.  By letting 
\begin{equation}\label{eq:pic}
	\pi_{\varphi}^c(X)=\max\big\{d: \mbox{Eq.}\: \eqref{eq:(ii)} \mbox{ holds}\big\},  
\end{equation}
and assuming that $\pi_{\varphi}^c(X)>0$ we can see that whenever $\left|\mathbb{I}^c(Y-A^oX)\right|\leq \pi_{\varphi}^c(X)$, $A^o$ can be exactly recovered by computing $\Psi(Y,X)$. Of course this is likely to hold only if the dense noise component $E$ does not exist. So in the situation where $E=0$, the theorem says that $A^o$ can be uniquely obtained by convex optimization provided that the number of outliers (nonzero columns of $F$) is  less than or equal to $\pi_{\varphi}^c(X)$. For the condition of exact recoverability to be checkable we must be able to compute $\pi_{\varphi}^c(X)$. The bad news are that evaluating numerically such a number  is likely to be NP-hard in most cases. \\ 
In the sequel, we investigate sufficient conditions of exact recovery which are more tractable from a numerical standpoint. For this purpose let us introduce some definitions. 
\begin{definition}\label{def:self-decomposable}
A matrix $X=\bbm x_1 & \cdots & x_N\eem\in \Re^{n\times N}$ is said to be self-decomposable if $\rank(X)=n$ and for all $k\in \mathbb{I}$, $x_k\in \im(X_{\neq k})$ where $X_{\neq k}\triangleq X_{\mathbb{I}\setminus \{k\}}$ is the matrix obtained from $X$ by removing its $k$-th column and $\im(\cdot)$  refers to range space. 
\end{definition}
For a matrix to be self-decomposable it is enough that $X_{\neq k}$ be full row rank for any $k\in \mathbb{I}$. Achieving this condition in practice seems easy provided that the number $N$ of measurements is large enough compared to the dimension $n$ of $X$. 

\begin{definition}[self-decomposability amplitude]
Let $X\in \Re^{n\times N}$ be a self-decomposable matrix. We call \emph{self-decomposability amplitude} of $X$, the number $\xi(X)$ defined by 
\begin{equation}\label{eq:xi(X)}
	\xi(X)=\max_{k\in \mathbb{I}}\min_{\gamma_k\in \Re^{N-1}}\Big\{\left\|\gamma_k\right\|_\infty : x_k=X_{\neq k}\gamma_k\Big\}.   
\end{equation}
\end{definition}
The so-defined $\xi(X)$ constitutes a quantitative measure of richness (or genericity) of the regressor matrix $X$. By richness it is meant here how much, in a global sense, the columns of $X$ are linearly independent. $\xi(X)$ is expected to be small if the columns of $X$ are somehow strongly linearly independent.  
\begin{rem}\label{rem:normalization}
If for some $k$ the norm of $x_k$ was to be considerably large in comparison to the norm of the other columns of $X$, then $\xi(X)$ would get large hence reducing recoverability capacity of the considered class of estimators (see also Eq. \eqref{eq:(ii)}). Such situations can be alleviated by normalizing each column of $X$, i.e., for example by replacing $(y_k,x_k)$ by  $(\tilde{y}_k,\tilde{x}_k)\triangleq(y_k/\left\|x_k\right\|,x_k/\left\|x_k\right\|)$ under the assumption that $x_k\neq 0$ for all $k\in \mathbb{I}$. 
\end{rem}

With the help of the device of self-decomposability amplitude \eqref{eq:xi(X)}, we can state a condition for exact recovery of the parameter matrix $A^o$ by solving the optimization problem in \eqref{eq:Psi-YX}.  A similar result was proven in \cite{Bako16-Automatica} for the Least Absolute Deviation (LAD) estimator. 

	\begin{thm}[A sufficient condition for exact recovery]
\label{thm:sufficient-condition-uniqueness}
Let $\varphi$ be a  function satisfying  \eqref{eq:column-summability}-\eqref{eq:Psi-Ineq2} with $\varepsilon=0$ and $\Psi$ be defined as in \eqref{eq:Psi-YX}.  
Assume that $X$ is self-decomposable. 
Then the following statement is true: 
\begin{equation}\label{eq:sufficient-condition-uniqueness}
	\begin{aligned}
	&\forall A\in \Re^{m\times n},\: \forall Y\in \Re^{m\times N}, \\
	&		\left|\mathbb{I}^c(Y-A X)\right|  < T\big(\xi(X)\big) \quad 
						 \Rightarrow \quad  \Psi(Y,X)=\big\{A\big\}. 
	\end{aligned}
\end{equation}
where $T:\Re_{>0}\rightarrow \Re_{>0}$ is the function defined by  $T(\alpha)=	\dfrac{1}{2}\big(1+\dfrac{1}{\alpha}\big)$. 
\end{thm}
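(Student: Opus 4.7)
The plan is to reduce the sufficient condition \eqref{eq:sufficient-condition-uniqueness} to the equivalent characterization \eqref{eq:(ii)} already proved in Theorem \ref{cor:NS-Conds}. That is, it suffices to show that whenever $d<T(\xi(X))$, one has
\[
	\varphi(\Lambda X_{I^c})<\tfrac{1}{2}\varphi(\Lambda X)
\]
for every $\Lambda\neq 0$ and every $I^c\subset\mathbb{I}$ with $|I^c|=d$. Since $\varepsilon=0$, property P3 forces $\varphi=\ell$ on all of $\mathcal{M}(\Re)$, so $\varphi$ is a matrix norm. In particular, viewed on column vectors, $\varphi$ satisfies absolute homogeneity and the triangle inequality, which, together with the column-wise summability \eqref{eq:column-summability}, will be the main tools.

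The core estimate comes from exploiting the self-decomposability of $X$ column by column. For each $k\in I^c$, pick $\gamma_k\in\Re^{N-1}$ attaining the minimum in \eqref{eq:xi(X)}, so that $x_k=X_{\neq k}\gamma_k$ and $\|\gamma_k\|_\infty\leq\xi(X)$. Applying $\Lambda$, the triangle inequality, and homogeneity of $\varphi$ on vectors yields
\[
	\varphi(\Lambda x_k)\leq\sum_{j\neq k}|(\gamma_k)_j|\,\varphi(\Lambda x_j)\leq\xi(X)\sum_{j\neq k}\varphi(\Lambda x_j).
\]
Summing over $k\in I^c$ and then interchanging the order of summation, each $j\in I^c$ appears $d-1$ times while each $j\in I^0\triangleq\mathbb{I}\setminus I^c$ appears $d$ times. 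Using column-wise summability on each piece, this gives
\[
	\varphi(\Lambda X_{I^c})\leq\xi(X)\bigl[(d-1)\,\varphi(\Lambda X_{I^c})+d\,\varphi(\Lambda X_{I^0})\bigr].
\]

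The final step is a short algebraic rearrangement. Under the hypothesis $d<T(\xi(X))=\tfrac{1}{2}(1+1/\xi(X))$, one has $(2d-1)\xi(X)<1$, which in particular guarantees $1-(d-1)\xi(X)>0$ and
\[
	\frac{d\,\xi(X)}{1-(d-1)\xi(X)}<1.
\]
Hence the previous inequality delivers $\varphi(\Lambda X_{I^c})<\varphi(\Lambda X_{I^0})$, i.e., $\varphi(\Lambda X_{I^c})<\tfrac{1}{2}\varphi(\Lambda X)$ by \eqref{eq:column-summability}. Since $\Lambda\neq 0$ and $I^c$ with $|I^c|=d$ were arbitrary, condition \eqref{eq:(ii)} of Theorem \ref{cor:NS-Conds} holds, and the desired implication \eqref{eq:sufficient-condition-uniqueness} follows. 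The main conceptual step is recognizing that self-decomposability yields the column-wise bound with constant $\xi(X)$; the only subtlety is the combinatorial bookkeeping of how many times each column is counted when summing over $k\in I^c$, which is what produces the threshold $T(\xi(X))$.
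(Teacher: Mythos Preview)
Your proof is correct and follows essentially the same route as the paper's. The only cosmetic difference is the order of operations: the paper first isolates the per-column bound $\varphi(\Lambda x_k)\leq \frac{\xi(X)}{1+\xi(X)}\varphi(\Lambda X)$ and then sums over $I^c$, whereas you sum over $k\in I^c$ first and do the combinatorial bookkeeping afterward; both routes land on the same inequality $[1-(d-1)\xi(X)]\,\varphi(\Lambda X_{I^c})\leq d\,\xi(X)\,\varphi(\Lambda X_{I^0})$ and hence the same threshold $T(\xi(X))$.
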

\begin{proof}
The proof is completely parallel to that of Theorem 11 in \cite{Bako16-Automatica}. 
From the assumptions, each $x_k$, $k\in \mathbb{I}$, can be written as a linear combination of the columns of $X_{\neq k}$. Let $\gamma_k\in \Re^{N-1}$ be any vector satisfying  
$x_k= X_{\neq k}\gamma_k$. It follows that for any $\Lambda\in \Re^{m\times n}$,
$$\varphi(\Lambda x_k) = \varphi\big(\sum_{t\in \mathbb{I}\setminus \left\{k\right\}}\gamma_{k,t}\Lambda x_t\big)$$
with $\gamma_{k,t}$ denoting the entry of $\gamma_k\in \Re^{N-1}$ indexed by $t$. 
Under the assumptions of the theorem, $\varphi$ is a norm. So, it is positive and satisfies the triangle inequality property. As a result we can write 
$$\varphi(\Lambda x_k) \leq \sum_{t\neq k}\left|\gamma_{k,t}\right| \varphi(\Lambda x_t) \leq \left\|\gamma_k\right\|_\infty\left(\varphi(\Lambda X)-\varphi(\Lambda x_k)\right)$$
where the rightmost term follows from the property \eqref{eq:column-summability} of $\varphi$.
Since this holds for any $\gamma_k$ such that $x_k= X_{\neq k}\gamma_k$, it holds also for 
$$\gamma_k^\star=\argmin_{\gamma\in \Re^{N-1}}\Big\{\left\|\gamma\right\|_\infty : x_k=X_{\neq k}\gamma\Big\}.$$
Hence, 
\begin{equation}\label{eq:sup-inequality}
	\varphi(\Lambda x_k) \leq \xi(X)\left(\varphi(\Lambda X)-\varphi(\Lambda x_k)\right)\: \:  \forall k\in \mathbb{I}, \forall \Lambda \in \Re^{m\times n}.
\end{equation}
or equivalently,
$$\varphi(\Lambda x_k) \leq \dfrac{\xi(X)} {1+\xi(X)}\varphi(\Lambda X)\quad \forall k\in \mathbb{I}, \: \forall \Lambda \in \Re^{m\times n}.$$
 Let $I^c$ be any subset  of $\mathbb{I}$ and pose $\left|I^c\right|=d$.  Summing the previous inequality over the set $I^c$ yields
\begin{equation}\label{eq:concentration-ratio}
	\max_{\Lambda\neq 0}\dfrac{\varphi(\Lambda X_{I^c})}{\varphi(\Lambda X)}\leq \dfrac{1} {2T\big(\xi(X)\big)}\left|I^c\right|
\end{equation}
Note that the term on the right hand side is well-defined since by the self-decomposability assumption, $\rank(X)=n$ which implies that  $\varphi(\Lambda X)\neq 0 $ whenever $\Lambda\neq 0$. 
Therefore \eqref{eq:(ii)}  holds if  $\left|I^c\right|<T\big(\xi(X)\big)$ and the conclusion follows from Theorem \ref{cor:NS-Conds}. 
\end{proof}
It is worth noting that the threshold $T(\xi(X))$ on the number of correctable outliers does not depend on $\varphi$. Hence this threshold is valid  when the estimator is defined from any matrix norm obeying \eqref{eq:column-summability}. 

\begin{rem}
The statement of Theorem \ref{thm:sufficient-condition-uniqueness} still holds true if we replace $\xi(X)$ with the $\varphi$-dependent number $\delta_\varphi(X)$  defined by 
\begin{equation}\label{eq:Lp(X)}
	\delta_\varphi(X)=\max_{k\in \mathbb{I}}\:  \sup_{\Lambda \neq 0}\dfrac{\varphi(\Lambda x_k)}{\varphi(\Lambda X_{\neq k})}
\end{equation}
when it is assumed that $\varphi$ is  a norm and $\rank(X_{\neq k})=n$ for all $k$. 
Doing so will give a less conservative condition for exact recovery. 
However $\delta_\varphi(X)$ seems much harder to evaluate numerically than $\xi(X)$. 
\end{rem}
\vspace{-5pt}
\begin{rem}[A few useful properties of $\xi(X)$] $\: $\hspace{100pt}
\begin{itemize}
	\item For any  nonsingular matrix $R\in \Re^{n\times n}$,  $\xi(RX)=\xi(X)$. It follows that the number $\xi(X)$ 
depends only on the subspace spanned by the rows of the regressor matrix $X$. 
\item For any self-decomposable $X\in \Re^{n\times N}$, $\xi(X)$ is lower-bounded in the following sense 
$$\xi(X)\geq \dfrac{1}{N-1}, $$ 
This follows from the more general observation that 
$$\xi(X)\geq \max_{k\in \mathbb{I}}\dfrac{\left\|x_k\right\|}{\sum_{t\neq k}\left\|x_t\right\|}$$ 
for any vector norm $\left\|\cdot\right\|$. 
As a result, $T(\xi(X)$ is upper-bounded as follows
$$T(\xi(X))\leq \dfrac{N}{2}. $$ 
\end{itemize}
\end{rem}
Theorem \ref{thm:sufficient-condition-uniqueness} provides a sufficient condition for exact recovery in the situation where the function $\varphi$ is a norm. Next, another condition is stated which holds in the general case. 
\begin{prop}
Consider a triplet $(\varphi,\ell,\varepsilon)$ satisfying \eqref{eq:column-summability}-\eqref{eq:Psi-Ineq2}. For  $A\in \Re^{m\times n}$ and $Y\in \Re^{m\times N}$, pose $I^c = \mathbb{I}^c(Y-A X)$, $I^0=\mathbb{I}\setminus I^c=\left\{t\in \mathbb{I}:y_t-Ax_t= 0\right\}$ and $I_\varepsilon^c(\Lambda X_{I^0})=\left\{t\in I^0: \ell(\Lambda x_t)>\varepsilon\right\}$.   
Then $\Psi(Y,X)=\left\{A\right\}$ if  
\begin{equation}\label{eq:SC-Not-Norm}
	\left|I_\varepsilon^c(\Lambda X_{I^0})\right|\varepsilon<\ell(\Lambda X_{I^0})-\ell(\Lambda X_{I^c})
\end{equation}
$\forall \Lambda\in \Re^{m\times n}, \Lambda\neq 0$.
\end{prop}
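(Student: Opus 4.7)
The plan is to show that under the stated hypothesis, $\varphi(Y-(A+\Lambda)X) > \varphi(Y-AX)$ for every nonzero $\Lambda \in \Re^{m\times n}$, which immediately gives uniqueness of $A$ as a minimizer of $H \mapsto \varphi(Y-HX)$. The proof splits the residuals according to the partition $\mathbb{I} = I^c \cup I^0$ and applies each of P1, P2, P3 exactly once to the corresponding piece.

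First, I would use the column-wise summability property P1 to decompose
\[
\varphi(Y-(A+\Lambda)X) = \varphi(Y_{I^c}-(A+\Lambda)X_{I^c}) + \varphi(Y_{I^0}-(A+\Lambda)X_{I^0}).
\]
On $I^0$ we have $Y_{I^0} - AX_{I^0} = 0$ by the very definition of $I^0$, so $Y_{I^0}-(A+\Lambda)X_{I^0} = -\Lambda X_{I^0}$. The lower bound in P3 applied to $-\Lambda X_{I^0}$, together with the fact that $\ell$ is a norm (hence symmetric under sign change, so $I_\varepsilon^c(-\Lambda X_{I^0}) = I_\varepsilon^c(\Lambda X_{I^0})$ and $\ell(-\Lambda X_{I^0}) = \ell(\Lambda X_{I^0})$), yields
\[
\varphi(-\Lambda X_{I^0}) \geq \ell(\Lambda X_{I^0}) - \left|I_\varepsilon^c(\Lambda X_{I^0})\right|\varepsilon.
\]
On $I^c$ I would apply P2 with $B = Y_{I^c} - AX_{I^c}$ and $C = \Lambda X_{I^c}$, which gives
\[
\varphi(Y_{I^c} - AX_{I^c}) \leq \varphi(Y_{I^c} - (A+\Lambda)X_{I^c}) + \ell(\Lambda X_{I^c}),
\]
i.e., $\varphi(Y_{I^c} - (A+\Lambda)X_{I^c}) \geq \varphi(Y_{I^c}-AX_{I^c}) - \ell(\Lambda X_{I^c})$.

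Combining these two lower bounds, and noting that by P1 together with $\varphi(0)=0$ (which follows from sandwiching $\varphi(0)$ by $\ell(0)=0$ in P3) one has $\varphi(Y-AX) = \varphi(Y_{I^c}-AX_{I^c})$, I obtain
\[
\varphi(Y-(A+\Lambda)X) \geq \varphi(Y-AX) + \bigl[\ell(\Lambda X_{I^0}) - \ell(\Lambda X_{I^c}) - \left|I_\varepsilon^c(\Lambda X_{I^0})\right|\varepsilon\bigr].
\]
By the assumption \eqref{eq:SC-Not-Norm}, the bracketed quantity is strictly positive for every nonzero $\Lambda$, so the strict inequality $\varphi(Y-(A+\Lambda)X) > \varphi(Y-AX)$ follows and $\Psi(Y,X) = \{A\}$.

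The argument is essentially a bookkeeping exercise in which each of P1, P2, P3 is used exactly once on the appropriate subset of columns. The only point that requires care is confirming $\varphi(0)=0$ (so that the $I^0$-part of $\varphi(Y-AX)$ drops) and making sure that the sign symmetry of $\ell$ lets me identify $\ell(-\Lambda X_{I^0})$ and $I_\varepsilon^c(-\Lambda X_{I^0})$ with their unsigned counterparts; both are minor consequences of $\ell$ being a norm and of the definition of $I_\varepsilon^c$. I do not anticipate any real obstacle beyond keeping the signs and indices consistent.
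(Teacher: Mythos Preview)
Your proof is correct and follows essentially the same route as the paper's: split via P1 over $I^0\cup I^c$, bound the $I^c$-difference by $\ell(\Lambda X_{I^c})$ via P2, bound $\varphi(\Lambda X_{I^0})$ from below via P3, and combine. If anything you are slightly more careful than the paper in explicitly invoking the sign symmetry of the norm $\ell$ (so that $\varphi(-\Lambda X_{I^0})$ inherits the same lower bound) and in justifying $\varphi(0)=0$.
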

\begin{proof}
$\Psi(Y,X)=\left\{A\right\}$ is equivalent to
$$\varphi(Y-AX)< \varphi(Y-(A+\Lambda)X)$$
for any $\Lambda\in \Re^{m\times n}$, $\Lambda\neq 0$. Using the definitions of the sets $I^0$ and $I^c$ and applying property \eqref{eq:column-summability} of $\varphi$ yields the equivalent relation
$$\varphi(Y_{I^c}-AX_{I^c})- \varphi(Y_{I^c}-(A+\Lambda)X_{I^c})<\varphi(\Lambda X_{I^0}).$$
By \eqref{eq:Psi-Ineq1}, we can note that $\varphi(Y_{I^c}-AX_{I^c})- \varphi(Y_{I^c}-(A+\Lambda)X_{I^c})\leq \ell(\Lambda X_{I^c})$. It  then follows that 
$$\ell(\Lambda X_{I^c})<\varphi(\Lambda X_{I^0})$$
 is a sufficient condition for $\Psi(Y,X)=\left\{A\right\}$. Finally, invoking  \eqref{eq:Psi-Ineq2} allows us to observe that $\ell(\Lambda X_{I^0})-\left|I_\varepsilon^c(\Lambda X_{I^0})\right|\varepsilon\leq \varphi(\Lambda X_{I^0})$ which implies that $\ell(\Lambda X_{I^c})<\ell(\Lambda X_{I^0})-\left|I_\varepsilon^c(\Lambda X_{I^0})\right|\varepsilon$ is a sufficient condition for $\Psi(Y,X)=\left\{A\right\}$. We have hence proved the proposition. 
\end{proof}

\subsection{Uncertainty set induced by dense noise}
When both $E$ and $F$ are nonzero in the data-generating system \eqref{eq:model-for-robust-estimation}, $\Psi(Y,X)$ is likely to be a non-singleton subset of $\mathscr{P}(\Re^{m\times n})$ especially if we consider all possible realizations of the unknown components $E$ and $F$. In this case the desirable properties of the estimator are in default of better (i) that it contains $A^o$ and (ii) that its size with respect to some metric is as small as possible. 
In this section we are interested in estimating the size of $\Psi(Y,X)$ when both dense noise $E$ and sparse noise $F$ are active in the data-generating system \eqref{eq:model-for-robust-estimation}. 

\paragraph{A notion of  estimator gain}
Similarly to the concept of system gain in control \cite{Zames81-TAC}, one could define the gain of an estimator, that is, a quantitative measure of the sensitivity of the estimator with respect to the perturbations affecting the measurements.  Consider a data pair $(Y,X)$ generated by a system of the form  \eqref{eq:model-for-robust-estimation} with $A^o$ being the parameter matrix sought for. Let us fix the sparse noise matrix $F$ or view it somehow as part of the data-generating system. This consideration proceeds from the fact that $\Psi$ can be insensitive to $F$ (when acting alone)  under, for example, the condition derived in Theorem \ref{thm:sufficient-condition-uniqueness}. Let $E$ be bounded in the sense that $\ell(E)$ is finite with $\ell$ being the norm appearing in \eqref{eq:Psi-Ineq2}. 
Then we can define a gain of the estimator with respect to the dense noise component $E$. 
More specifically, an $(\ell,q)$-gain of the estimator $\Psi$ 
with respect to the dense noise $E$ may be defined by 
\begin{equation}
	g_{\ell,q}(Y,X)=\sup_{\substack{A^\star \in \Psi(Y,X)\\ 0<\ell(E)<\infty\\F\mbox{ \scriptsize sparse}}}\dfrac{\left\|A^\star-A^o\right\|_q}{\ell(E)}.
\end{equation}
Here $\left\|\cdot\right\|_q$ denotes matrix $q$-norm. The so-defined number $g_{\ell,q}(Y,X)$ provides an upper bound on the distance from the set $\Psi(Y,X)$ to $A^o$ in function of the amount of dense noise.  
The following theorem and its corollaries show that if the number of nonzero columns in $F$ is no larger than a certain threshold, then $g_{\ell,q}(Y,X)$ exists and is finite. 
\begin{thm}\label{lem:Bound-Natural}
Let $(Y,X)$ be the data generated by system \eqref{eq:model-for-robust-estimation} subject to the noise components $E$ and $F$.  Consider a triplet $(\varphi,\ell,\varepsilon)$ satisfying \eqref{eq:column-summability}-\eqref{eq:Psi-Ineq2}. Let $S^0\subset \mathbb{I}$ be a set such that $F_{S^0}=0$ and let $S^c=\mathbb{I}\setminus S^0$. 
Assume that the matrix $X$ and the partition $(S^0,S^c)$  are such that there exists $\alpha>0$ such that
 \begin{equation}\label{eq:existence-infimum}
	 \ell(\Lambda X_{S^0})-\ell(\Lambda X_{S^c})\geq \alpha \left\|\Lambda\right\|_q \: \forall \Lambda\in \Re^{m\times n}, 
 \end{equation}
with $\left\|\cdot\right\|_q$ denoting some matrix $q$-norm. \\  
 Then for any $A^\star\in \Psi(Y,X)$, it holds that  
\begin{equation}\label{eq:Error-Bound-L1}
	\left\|A^\star-A^o\right\|_q\leq \dfrac{1}{\gamma_{\ell,q}(X,{S^c})} \big[2\ell(E_{S^0})+\left|I_\varepsilon^c\right|\varepsilon\big]
\end{equation}
with\footnote{The notation $I_\varepsilon^c$ is used for simplicity reasons.} $I_\varepsilon^c =I_\varepsilon^c(Y_{S^0}-A^\star X_{S^0})= \left\{t\in S^0: \ell(y_t-A^\star x_t)> \varepsilon\right\}$ and 
\begin{equation}\label{eq:gamma}
	\gamma_{\ell,q}(X,{S^c})=\inf_{\substack{\Lambda\in \Re^{m\times n}\\\Lambda\neq 0}}\dfrac{\ell(\Lambda X_{S^0})-\ell(\Lambda X_{S^c})}{\left\|\Lambda\right\|_q}
\end{equation}
where $\left\|\cdot\right\|_q$ refers to matrix $q$-norm. 
\end{thm}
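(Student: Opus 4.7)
The plan is to exploit the fact that $A^\star$ is a minimizer of $A \mapsto \varphi(Y-AX)$, combine this with the column-wise summability property P1 (to split over the partition $(S^0,S^c)$), and then translate everything from $\varphi$ into inequalities on $\ell$ using P2 and P3, so that the definition of $\gamma_{\ell,q}(X,S^c)$ can be invoked at the end.

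First I would set $\Lambda = A^\star - A^o$, so that $Y - A^\star X = E + F - \Lambda X$ and $Y - A^o X = E + F$. Using $F_{S^0} = 0$ together with P1, the optimality inequality $\varphi(Y - A^\star X) \leq \varphi(Y - A^o X)$ rearranges to
\begin{equation*}
\varphi(E_{S^0} - \Lambda X_{S^0}) - \varphi(E_{S^0}) \;\leq\; \varphi(E_{S^c} + F_{S^c}) - \varphi(E_{S^c} + F_{S^c} - \Lambda X_{S^c}).
\end{equation*}
The right-hand side is bounded above by $\ell(\Lambda X_{S^c})$ through a direct application of P2 with $B = E_{S^c} + F_{S^c}$ and $C = \Lambda X_{S^c}$.

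Next I would produce a lower bound for the left-hand side in terms of $\ell(\Lambda X_{S^0})$. The lower inequality in P3 applied to $E_{S^0} - \Lambda X_{S^0}$ gives $\varphi(E_{S^0} - \Lambda X_{S^0}) \geq \ell(E_{S^0} - \Lambda X_{S^0}) - |I_\varepsilon^c|\varepsilon$, where the index set matches the theorem's definition because $Y_{S^0} - A^\star X_{S^0} = E_{S^0} - \Lambda X_{S^0}$. The upper inequality in P3 yields $\varphi(E_{S^0}) \leq \ell(E_{S^0})$. Combining these with the reverse triangle inequality for the norm $\ell$, namely $\ell(E_{S^0} - \Lambda X_{S^0}) \geq \ell(\Lambda X_{S^0}) - \ell(E_{S^0})$, I obtain
\begin{equation*}
\varphi(E_{S^0} - \Lambda X_{S^0}) - \varphi(E_{S^0}) \;\geq\; \ell(\Lambda X_{S^0}) - 2\ell(E_{S^0}) - |I_\varepsilon^c|\varepsilon.
\end{equation*}

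Chaining the two bounds produces $\ell(\Lambda X_{S^0}) - \ell(\Lambda X_{S^c}) \leq 2\ell(E_{S^0}) + |I_\varepsilon^c|\varepsilon$. Since assumption \eqref{eq:existence-infimum} guarantees that $\gamma_{\ell,q}(X,S^c) \geq \alpha > 0$ is well-defined and strictly positive, the definition \eqref{eq:gamma} gives $\gamma_{\ell,q}(X,S^c)\,\|\Lambda\|_q \leq \ell(\Lambda X_{S^0}) - \ell(\Lambda X_{S^c})$, and dividing yields exactly \eqref{eq:Error-Bound-L1}. The only delicate point is the lower bound on $\varphi(E_{S^0} - \Lambda X_{S^0}) - \varphi(E_{S^0})$: one must apply P3 in its two directions (one as a lower bound, one as an upper bound) and then bridge back to $\ell(\Lambda X_{S^0})$ by the triangle inequality, carefully keeping track of the set $I_\varepsilon^c$ so that it matches the one in the theorem statement. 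The rest is bookkeeping.
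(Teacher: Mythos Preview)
Your proof is correct and follows essentially the same route as the paper's: compare $A^\star$ to $A^o$ via optimality, split over $(S^0,S^c)$ by P1, bound the $S^c$ part through P2, convert the $S^0$ part to $\ell$ via both directions of P3 plus the triangle inequality, and finish with the definition of $\gamma_{\ell,q}(X,S^c)$. The only cosmetic difference is that the paper writes $\Lambda^\star$ for $A^\star-A^o$ and presents the intermediate inequalities in a slightly different order, but every step and every use of P1--P3 coincides with yours.
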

\begin{proof}
By definition of $\Psi(Y,X)$ in \eqref{eq:Psi-YX}, 
$$\varphi(Y-A^\star X )\leq \varphi(Y-A X)\:  \forall  A \in \Re^{m\times n} $$
By letting $\Lambda=A-A^o$, $\Lambda^\star=A^\star-A^o$ and applying \eqref{eq:model}, the last inequality takes the form
$$\varphi(F+E-\Lambda^\star X)\leq \varphi(F+E-\Lambda  X)\:  \forall  \Lambda \in \Re^{m\times n}.$$
In particular, for $\Lambda=0$, we get 
$\varphi(F+E-\Lambda^\star X)\leq \varphi(F+E) $ which, thanks to property \eqref{eq:column-summability} of $\varphi$, takes the form
$$\begin{aligned}
	\varphi(F_{S^c}+E_{S^c}-\Lambda^\star X_{S^c})+&\varphi(E_{S^0}-\Lambda^\star X_{S^0})\\
	&\leq \varphi(F_{S^c}+E_{S^c}) +\varphi(E_{S^0}).
\end{aligned}$$
Now applying property \eqref{eq:Psi-Ineq1} to the first member of the left hand side and rearranging yields 
$$\varphi(E_{S^0}-\Lambda^\star X_{S^0})-\ell(\Lambda^\star X_{S^c})\leq \varphi(E_{S^0}).$$

\noindent Using \eqref{eq:Psi-Ineq2} gives 
$$\ell(E_{S^0}-\Lambda^\star X_{S^0})-\left|I_\varepsilon^c\right|\varepsilon-\ell(\Lambda^\star X_{S^c})\leq \varphi(E_{S^0})\leq \ell(E_{S^0}).$$
Here we used the fact that $I_{\varepsilon}^c(E_{S^0}-\Lambda^\star X_{S^0})$ is equal to the set $I_{\varepsilon}^c$ defined in the statement of the theorem. \\
Applying the triangle inequality property of $\ell$, it can be seen that $\ell(\Lambda^\star X_{S^0})-\ell(E_{S^0})\leq \ell(E_{S^0}-\Lambda^\star X_{S^0})$. Combining with the previous inequality yields
$$\ell(\Lambda^\star X_{S^0})-\ell(\Lambda^\star X_{S^c})\leq 2\ell(E_{S^0})+\left|I_\varepsilon^c\right|\varepsilon.$$
Finally, it follows from the definition of $\gamma_{\ell,q}(X,{S^c})$ in \eqref{eq:gamma} that 
$$\gamma_{\ell,q}(X,{S^c})\left\|\Lambda^\star\right\|_q \leq \big[2\ell(E_{S^0})+\left|I_\varepsilon^c\right|\varepsilon\big].$$
The condition \eqref{eq:existence-infimum} guarantees that $\gamma_{\ell,q}(X,{S^c})$ is well-defined and is positive. Hence the statement of the theorem is established. 
\end{proof}
Theorem \ref{lem:Bound-Natural} constitutes an interesting stability result in that it provides a finite upper bound on the distance from  $A^o$ to the set $\Psi(Y,X)$  as a function of the amplitude of the dense noise matrix $E$. It applies to any estimator $\Psi$ defined as in  \eqref{eq:Psi-YX} with $\varphi$ a function  obeying  \eqref{eq:column-summability}-\eqref{eq:Psi-Ineq2}. In particular, in the situation where  $\varphi$ is a norm (in which case $\varepsilon$ can be taken equal to zero in \eqref{eq:Psi-Ineq2}), the inequality in \eqref{eq:Error-Bound-L1} simplifies to
\begin{equation}\label{eq:simplification-estimate}
		\left\|A^\star-A^o\right\|_q\leq \dfrac{2}{\gamma_{\ell,q}(X,{S^c})} \ell(E_{S^0}).
\end{equation}
If $\varphi$ is defined as in \eqref{eq:phi-example} (which, recall, is not a norm) and if the dense noise matrix $E$ is such that $\ell^o(e_t)\leq \varepsilon^o$ for all $t\in \mathbb{I}$, then by taking $\varepsilon=\varepsilon^o$ the set $I_\varepsilon^c$ defined in the statement of Theorem \ref{lem:Bound-Natural} corresponds to the empty set so that \eqref{eq:simplification-estimate} holds as well in this case. In connection with the concept of estimator gain discussed earlier, one can interpret  the factor  $2/\gamma_{\ell,q}(X,{S^c})$ as an estimate of the gain (of the estimator $\Psi$)  with respect to dense noise. 

Lastly, it is interesting to see that when $\varphi$ is a norm, if  $E= 0$ then the result of Theorem  \ref{lem:Bound-Natural} implies that  $\Psi(Y,X)=\left\{A^o\right\}$ provided \eqref{eq:existence-infimum} is true.  

\section{Discussions on some special cases}\label{sec:special-case}
For the purpose of illustrating the extent of the results above, let us discuss further the situation where $\varphi$ reduces to a norm. 

\subsection{Scenario when the loss function is a norm}
\begin{cor}\label{thm:stability-Psi-LossFunction}
Let $(Y,X)$ be the data generated by system \eqref{eq:model-for-robust-estimation} subject to the noise components $E$ and $F$.  
Let $S^0$ and  $S^c$ be defined as in the statement of Theorem \ref{lem:Bound-Natural}.  Assume that  $\varphi$ is a norm i.e., it satisfies  \eqref{eq:column-summability}-\eqref{eq:Psi-Ineq2} with $\varepsilon=0$.  \\
If $X$ is self-decomposable  and $\left|S^c\right|<T\big(\xi(X)\big)$,  then for any $A^\star\in \Psi(Y,X)$, 
\begin{equation}\label{eq:Error-Bound-Gain-Psi}
	\left\|A^\star-A^o\right\|_q\leq \mathscr{B}_{\varphi,q}(|S^0|,X) \varphi(E_{S^0})
\end{equation}
where 
\begin{align}
	&\mathscr{B}_{\varphi,q}(r,X)=\dfrac{2}{\sigma_{\varphi,q}(X)\Big[1-\dfrac{N-r}{ T(\xi(X))}\Big]},\label{eq:Bound} \\
	&\sigma_{\varphi,q}(X)=\inf_{\Lambda \neq 0}\dfrac{\varphi(\Lambda X)}{\left\|\Lambda\right\|_{q}} \label{eq:sigma(X)}
\end{align}
\end{cor}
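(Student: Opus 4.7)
The strategy is to deduce this corollary from Theorem \ref{lem:Bound-Natural}, since all its hypotheses are satisfied in the present setting. Because $\varphi$ is a norm, property \eqref{eq:Psi-Ineq2} holds with $\varepsilon=0$ and in fact $\ell=\varphi$, so the error bound \eqref{eq:Error-Bound-L1} reduces to the cleaner form \eqref{eq:simplification-estimate}, namely $\|A^\star - A^o\|_q \leq 2\varphi(E_{S^0})/\gamma_{\varphi,q}(X,S^c)$. Consequently, it suffices to prove the lower bound
\[
\gamma_{\varphi,q}(X,S^c) \;\geq\; \sigma_{\varphi,q}(X)\Big[1-\tfrac{N-r}{T(\xi(X))}\Big],
\]
with $r=|S^0|$ and $N-r=|S^c|$, and then to plug it into \eqref{eq:simplification-estimate}. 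The self-decomposability hypothesis on $X$ guarantees that $\rank(X)=n$, so $\sigma_{\varphi,q}(X)>0$ and the denominator in \eqref{eq:Bound} is well defined.

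The core work is the lower bound on $\gamma_{\varphi,q}(X,S^c)$. The key input is the concentration inequality \eqref{eq:concentration-ratio} obtained in the proof of Theorem \ref{thm:sufficient-condition-uniqueness}: for every subset $I^c\subset \mathbb{I}$ and every $\Lambda\neq 0$,
\[
\varphi(\Lambda X_{I^c}) \;\leq\; \frac{|I^c|}{2T(\xi(X))}\,\varphi(\Lambda X).
\]
Applying this with $I^c=S^c$ and combining it with column-wise summability $\varphi(\Lambda X)=\varphi(\Lambda X_{S^0})+\varphi(\Lambda X_{S^c})$ yields
\[
\varphi(\Lambda X_{S^0})-\varphi(\Lambda X_{S^c})\;=\;\varphi(\Lambda X)-2\varphi(\Lambda X_{S^c})\;\geq\;\varphi(\Lambda X)\Big[1-\tfrac{|S^c|}{T(\xi(X))}\Big].
\]
Dividing by $\|\Lambda\|_q$ and taking the infimum over nonzero $\Lambda$ gives exactly the desired lower bound on $\gamma_{\varphi,q}(X,S^c)$. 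The assumption $|S^c|<T(\xi(X))$ ensures the bracket is strictly positive, so $\gamma_{\varphi,q}(X,S^c)>0$ and condition \eqref{eq:existence-infimum} of Theorem \ref{lem:Bound-Natural} holds with $\alpha=\sigma_{\varphi,q}(X)[1-|S^c|/T(\xi(X))]$.

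It then remains to substitute the lower bound on $\gamma_{\varphi,q}(X,S^c)$ into \eqref{eq:simplification-estimate} and rewrite $|S^c|=N-r$, which produces exactly $\mathscr{B}_{\varphi,q}(r,X)\varphi(E_{S^0})$. I do not expect any true obstacle here: once Theorem \ref{lem:Bound-Natural} and the concentration inequality \eqref{eq:concentration-ratio} are in hand, the argument is essentially algebraic. The only subtle point to verify carefully is that, under the standing assumptions, condition \eqref{eq:existence-infimum} of Theorem \ref{lem:Bound-Natural} is actually satisfied, which is precisely what the lower bound on $\gamma_{\varphi,q}(X,S^c)$ above establishes.
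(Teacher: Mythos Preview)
Your proposal is correct and follows essentially the same route as the paper's own proof: both reduce the corollary to Theorem~\ref{lem:Bound-Natural} via the simplification~\eqref{eq:simplification-estimate}, establish the lower bound $\gamma_{\varphi,q}(X,S^c)\geq \sigma_{\varphi,q}(X)\big[1-|S^c|/T(\xi(X))\big]$ by invoking the concentration inequality~\eqref{eq:concentration-ratio} from the proof of Theorem~\ref{thm:sufficient-condition-uniqueness} together with column-wise summability, and conclude by noting that self-decomposability gives $\sigma_{\varphi,q}(X)>0$. The only difference is cosmetic: the paper factors the numerator as $2\varphi(\Lambda X)\big[\tfrac{1}{2}-\varphi(\Lambda X_{S^c})/\varphi(\Lambda X)\big]$ whereas you write the same quantity directly as $\varphi(\Lambda X)-2\varphi(\Lambda X_{S^c})$.
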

\begin{proof}
 The principle of the proof is to show that $\gamma_{\ell,q}(X,{S^c})$ is well-defined and then  find a positive underestimate of it. Using the property \eqref{eq:column-summability} of $\varphi$ and the fact that $\varphi=\ell$, we can  write 
$$
\dfrac{\ell(\Lambda X_{S^0})-\ell(\Lambda X_{S^c})}{\left\|\Lambda\right\|_q}= \dfrac{2\varphi(\Lambda X)}{\left\|\Lambda\right\|_q}\left[\dfrac{1}{2}-\dfrac{\varphi(\Lambda X_{S^c})}{\varphi(\Lambda X)}\right]. 
$$
On the other hand we know from the proof of Theorem \ref{thm:sufficient-condition-uniqueness} (see Eq. \eqref{eq:concentration-ratio}) that 
$$\dfrac{\varphi(\Lambda X_{S^c})}{\varphi(\Lambda X)}\leq \dfrac{1}{2T(\xi(X))}\left|S^c\right| $$
so that 
	$$\left[1-\dfrac{\left|S^c\right|}{T(\xi(X))}\right] \dfrac{\varphi(\Lambda X)}{\left\|\Lambda\right\|_q}\leq \dfrac{\ell(\Lambda X_{S^0})-\ell(\Lambda X_{S^c})}{\left\|\Lambda\right\|_q} $$
Taking now the infimum on both sides of the inequality symbol over all nonzero matrices $\Lambda\in \Re^{m\times n}$ yields
$$\sigma_{\varphi,q}(X) \left[1-\dfrac{\left|S^c\right|}{T(\xi(X))}\right]\leq \gamma_{\ell,q}(X,{S^c}).$$	
It follows from the rank condition imposed on $X$ (by the self-decomposability assumption)  that $\sigma_{\varphi,q}(X)>0$. This shows that $\gamma_{\ell,q}(X,{S^c})$ is well defined and is strictly positive.  Finally, since $\varphi=\ell$, invoking  \eqref{eq:simplification-estimate}  gives the result. 
\end{proof}
\noindent Two important comments can be made at this stage. 
\begin{itemize}
	\item First it is interesting to note that the  bound $\mathscr{B}_{\varphi,q}(r,X)$ is an increasing function of $\xi(X)$. Therefore it is all the smaller as  $\xi(X)$ is small. That is, the error bound will be small if the data matrix $X$ is rich enough.  
\item Second,  $\mathscr{B}_{\varphi,q}(r,X)$ is a decreasing function of $r$. This means that the upper bound on the estimation error decreases when the number of gross error columns in $F$ decreases. In the extreme case where $\left|S^0\right|=N$ (no gross error), $\mathscr{B}_{\varphi,q}(|S^0|,X)$ in \eqref{eq:Error-Bound-Gain-Psi} reduces to $2/\sigma_{\varphi,q}(X)$. 
\end{itemize}
Beyond these observations it should be noted that a key assumption of Corollary \ref{thm:stability-Psi-LossFunction} is that  $\left|S^c\right|<T\big(\xi(X)\big)$ with $S^c$ being the index set of the nonzero columns in $F$. Realizing this condition requires on the one hand that the number of nonzero columns in the sparse noise matrix $F$ be small and on the other hand that $\xi(X)$ be small\footnote{Recall that $T$ is a decreasing function hence implying that $T(\xi(X))$ is large when $\xi(X)$ is small.} (which means that the data must be generic). Indeed this condition is not necessarily as strong as it might appear to be at first sight. For example, it can be relaxed as follows.  Observe that the sum $E+F$ is not uniquely defined from model \eqref{eq:model}. Taking advantage of this, one can always absorb in $E$ all nonzero columns of $F$ whose magnitude does not exceed a certain level. To see this, let $I=\left\{t\in S^c: \ell(e_t+f_t)\leq \varepsilon^o\right\}$ where $\varepsilon^o=\max_{t\in \mathbb{I}}\ell(e_t)$. Then we can define $\tilde{E}$  and $\tilde{F}$ such that $E+F=\tilde{E}+\tilde{F}$ and $\tilde{F}_{S^0\cup I}=0$ that is, we set $\tilde{e}_t=f_t+e_t$ and $\tilde{f}_t=0$ for any $t\in I$ and $(\tilde{e}_t,\tilde{f}_t)=(e_t,f_t)$ otherwise. As a consequence, $E$ and $F$ in Corollary \ref{thm:stability-Psi-LossFunction} can be replaced  by $\tilde{E}$ and $\tilde{F}$ respectively so that $|S|$ and $|S^c|$ are replaced by $|S|+|I|$ and $|S^c|-|I|$. The condition of the corollary then becomes  $\left|S^c\right|-\left|I\right|<T\big(\xi(X)\big)$, which is potentially easier to fulfill. 

\begin{rem}[sum of $p$-norms]\label{rem:bound-evaluation}
Evaluating numerically the bound $\mathscr{B}_\varphi(r,X)$ might prove to be a hard problem due to the potential difficulty in computing the term $\sigma_{\varphi,q}(X)$ in \eqref{eq:sigma(X)}. 
 A particular case of interest is when $\varphi$ consists of a sum of $p$-norms of the column vectors, i.e. when it is defined by 
$\varphi(B)=\sum_{i=1}^N\left\|b_i\right\|_{p}$ for $B=\bbm b_1 & \cdots & b_N\eem$. 
In this case if we take $q=2$ in \eqref{eq:Error-Bound-Gain-Psi} and \eqref{eq:sigma(X)}, it is easy to see that $\lambda^{1/2}_{\min}(XX^\top)\leq \sigma_{\varphi,2}(X)$ with $\lambda^{1/2}_{\min}(\cdot)$ denoting the square root of the minimum eigenvalue. Replacing  $\sigma_{\varphi,2}(X)$ with $\lambda^{1/2}_{\min}(XX^\top)$ in \eqref{eq:Bound} yields an  overestimate of $\mathscr{B}_\varphi(r,X)$ which is computable.   
\end{rem}

\begin{rem}
Corollary \ref{thm:stability-Psi-LossFunction} still holds true if one replaces $T(\xi(X))$ with $\pi_{\varphi}^c(X)$ defined in \eqref{eq:pic}. 
As shown in \cite{Sharon09-ACC}, the number $\pi_{\varphi}^c(X)$ in \eqref{eq:pic} is computable although at the price of a combinatorial complexity. However if the $n$-dimension of $X$ is small enough the complexity of the algorithm proposed there can be affordable. Then by using our formula \eqref{eq:Bound} and Remark \ref{rem:bound-evaluation} above, it is possible  therefore to obtain a smaller bound on the estimation error. 
\end{rem}

\subsection{Single output case: $\ell_1$ norm}
In this section, we discuss for an illustrative purpose, the applicability of Theorem \ref{lem:Bound-Natural} to  the case of single-output systems. This is an interesting case to highlight since it represents the most classical situation. Consider the single-output system defined by 
\begin{equation}\label{eq:SISO}
	y_t=(\theta^o)^\top x_t+f_t+e_t
\end{equation}
where $y_t$, $e_t$, $f_t$ are scalars and $x_t$ and $\theta^o$ are $n$-dimensional vectors. By letting $Y=\bbm y_1 & \cdots & y_N\eem \in \Re^{1\times N}$ and defining $E$ and $F$ similarly, we obtain 
\begin{equation}\label{eq:single-output}
	Y=(\theta^o)^\top X+F+E.
\end{equation}
This last equation corresponds indeed to \eqref{eq:model} where the matrix $A^o$ reduces to the row vector $(\theta^o)^\top$. In this case, if we let $\varphi(B)=\sum_{i=1}^N \left\|b_i\right\|_2$ then for any $\theta\in \Re^n$, the columns of (the row vector) $Y-AX$ are scalars so that 
\begin{equation}\label{eq:L1-case}
	\varphi(Y-\theta^\top X)=\sum_{t=1}^N\big\|y_t-\theta^\top x_t\big\|_2=\sum_{t=1}^N\big|y_t-\theta^\top x_t\big|.
\end{equation}
As a result, $\Psi$ coincides in this case with the Least Absolute Deviation (LAD) estimator. 
The following corollary specializes the result of Theorem \ref{lem:Bound-Natural} to the LAD estimator. 
\begin{cor}\label{cor:L1-norm}
Let $(Y,X)\in \Re^{1\times N}\times \Re^{n\times N}$ be generated by model \eqref{eq:SISO}. 
Let $S^c=\big\{t\in \mathbb{I}:f_t\neq 0\big\}$, $S^0=\mathbb{I}\setminus S^c$. 
Assume that $X$ is self-decomposable and $\left|S^c\right|<T\big(\xi(X)\big)$.  
Then for any $\displaystyle\theta^\star \in \argmin_{\theta\in \Re^n}\big\|Y-\theta^\top X\big\|_1$,
$$
	\left\|\theta^\star-\theta^o\right\|_2\leq \mathscr{B}_{1,2}\big(|S^0|,X\big) \left\|E_{S^0}\right\|_1
$$
where 
$$ 
\begin{aligned}
	\mathscr{B}_{1,2}(r,X)&=\dfrac{2}{\sigma_{1,2}(X)\Big[1-\dfrac{N-r}{T(\xi(X))}\Big]},\\
	\sigma_{1,2}(X)&=\inf_{\eta\neq 0}\dfrac{\left\|X^\top \eta\right\|_1}{\left\|\eta\right\|_2}.
\end{aligned}
$$ 
\end{cor}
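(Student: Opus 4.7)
The plan is to recognize that Corollary \ref{cor:L1-norm} is a direct specialization of Corollary \ref{thm:stability-Psi-LossFunction} to the row-vector ($m=1$) setting, with the choice $\varphi(B)=\sum_{i=1}^N\|b_i\|_2$ and $q=2$. So the proof is mostly a matter of checking that all the hypotheses translate correctly and identifying the specialized forms of $\sigma_{\varphi,q}(X)$ and $\varphi(E_{S^0})$.

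First I would verify that the chosen $\varphi$ is a matrix norm and hence satisfies properties P1--P3 with $\varepsilon=0$: column-wise summability holds by construction, while property P2 is the triangle inequality (with $\ell=\varphi$) and P3 becomes trivial when $\varepsilon=0$. Hence Corollary \ref{thm:stability-Psi-LossFunction} applies as soon as $X$ is self-decomposable and $|S^c|<T(\xi(X))$, both of which are assumed.

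Next I would make the SISO identifications. Any perturbation matrix $\Lambda\in\Re^{1\times n}$ can be written as $\Lambda=\eta^\top$ for some $\eta\in\Re^n$. Then $\Lambda X = \eta^\top X$ is a row vector whose $t$-th entry is $\eta^\top x_t\in\Re$, so
\[
\varphi(\Lambda X)=\sum_{t=1}^N\bigl|\eta^\top x_t\bigr|=\bigl\|X^\top\eta\bigr\|_1 .
\]
Moreover, the induced matrix $2$-norm of a row vector $\eta^\top$ is simply $\|\eta\|_2$. Consequently, the generic quantity
\[
\sigma_{\varphi,2}(X)=\inf_{\Lambda\neq 0}\frac{\varphi(\Lambda X)}{\|\Lambda\|_2}
\]
collapses to $\inf_{\eta\neq 0}\|X^\top\eta\|_1/\|\eta\|_2 = \sigma_{1,2}(X)$ as claimed. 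Similarly, since $E_{S^0}$ is a row vector, $\varphi(E_{S^0})=\sum_{t\in S^0}|e_t|=\|E_{S^0}\|_1$. Finally, the estimator $\Psi(Y,X)=\argmin_\theta\varphi(Y-\theta^\top X)$ reduces via \eqref{eq:L1-case} to the LAD estimator $\argmin_\theta\|Y-\theta^\top X\|_1$, so any minimizer $\theta^\star$ is precisely an element of $\Psi(Y,X)$ in the sense of Corollary \ref{thm:stability-Psi-LossFunction}.

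With these identifications in hand, the bound \eqref{eq:Error-Bound-Gain-Psi} applied to $A^\star = (\theta^\star)^\top$ and $A^o=(\theta^o)^\top$ yields immediately
\[
\|\theta^\star-\theta^o\|_2 = \|A^\star-A^o\|_2 \leq \mathscr{B}_{1,2}(|S^0|,X)\,\|E_{S^0}\|_1 ,
\]
which is the claim. There is no real obstacle here: the only point that requires a brief check is the equality $\|\eta^\top\|_2=\|\eta\|_2$ for the induced matrix $2$-norm of a row vector, which follows from the definition $\|\eta^\top\|_2=\sup_{\|z\|_2=1}|\eta^\top z|=\|\eta\|_2$. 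Everything else is bookkeeping.
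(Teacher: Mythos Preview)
Your proposal is correct and matches the paper's approach: the paper states Corollary \ref{cor:L1-norm} without an explicit proof, relying on the identifications set up in \eqref{eq:L1-case} to present it as an immediate specialization of Corollary \ref{thm:stability-Psi-LossFunction} with $m=1$, $\varphi(B)=\sum_i\|b_i\|_2$, and $q=2$. Your careful verification of the correspondences $\sigma_{\varphi,2}(X)=\sigma_{1,2}(X)$, $\varphi(E_{S^0})=\|E_{S^0}\|_1$, and $\|\eta^\top\|_2=\|\eta\|_2$ fills in exactly the bookkeeping the paper leaves implicit.
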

Again here the bound $\mathscr{B}_{1,2}(r,X)$  can be numerically overestimated by following the idea of Remark \ref{rem:bound-evaluation}. 

\begin{figure*}
\centering
\psfrag{B}[][]{\footnotesize Bound}
\psfrag{alpha}[][]{\footnotesize Percentage of nonzeros [\%]}
\psfrag{pi}{\tiny $\pi_{\varphi}^c(X)$}
\psfrag{T}{\tiny $T(\xi(X))$}
\subfloat[static system: $\xi(X)=0.0083$]{\includegraphics[width=8cm,height=4.5cm]{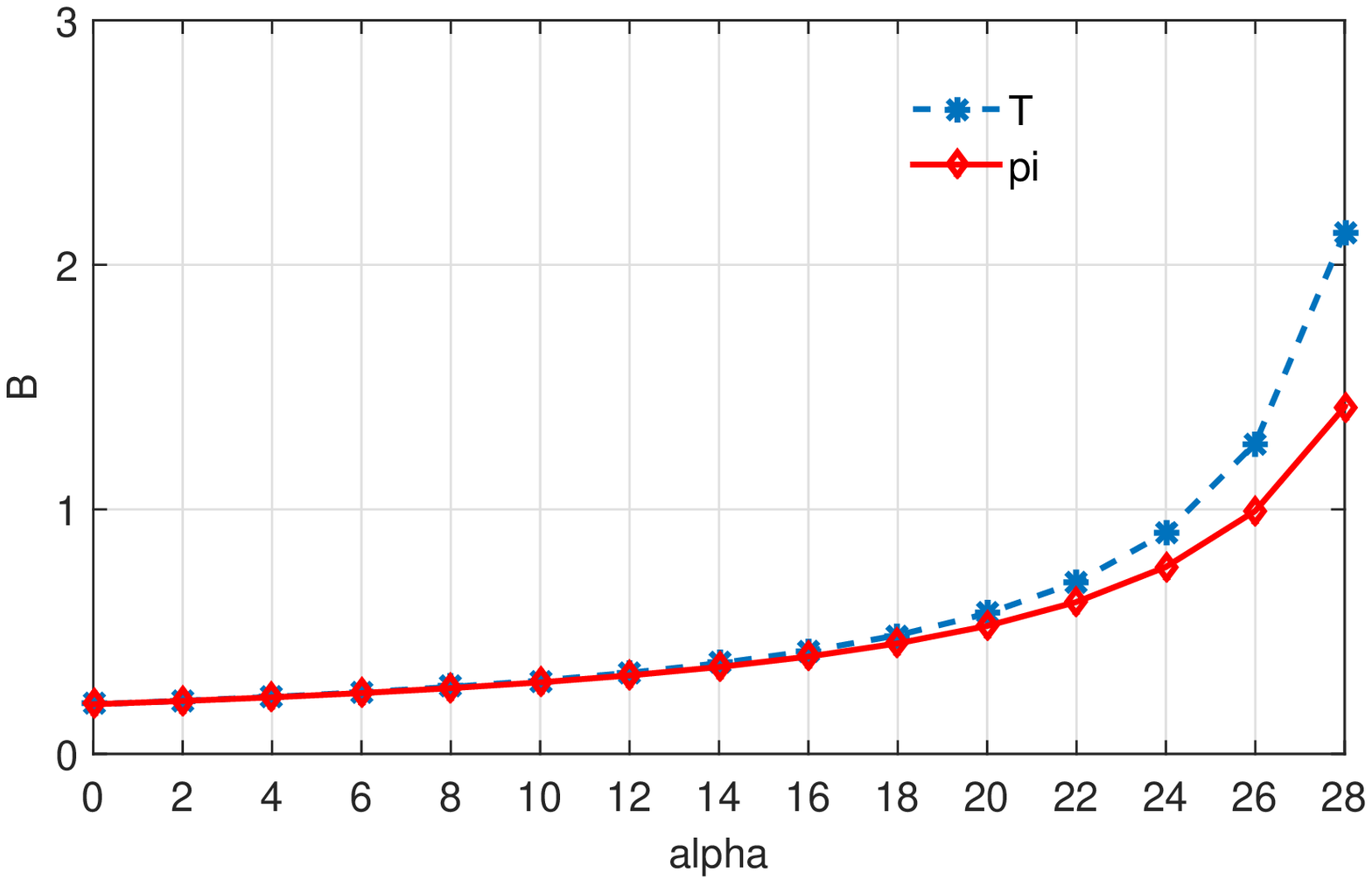}} 
\subfloat[switched system: $\xi(X)=0.0127$]{\includegraphics[width=8cm,height=4.5cm]{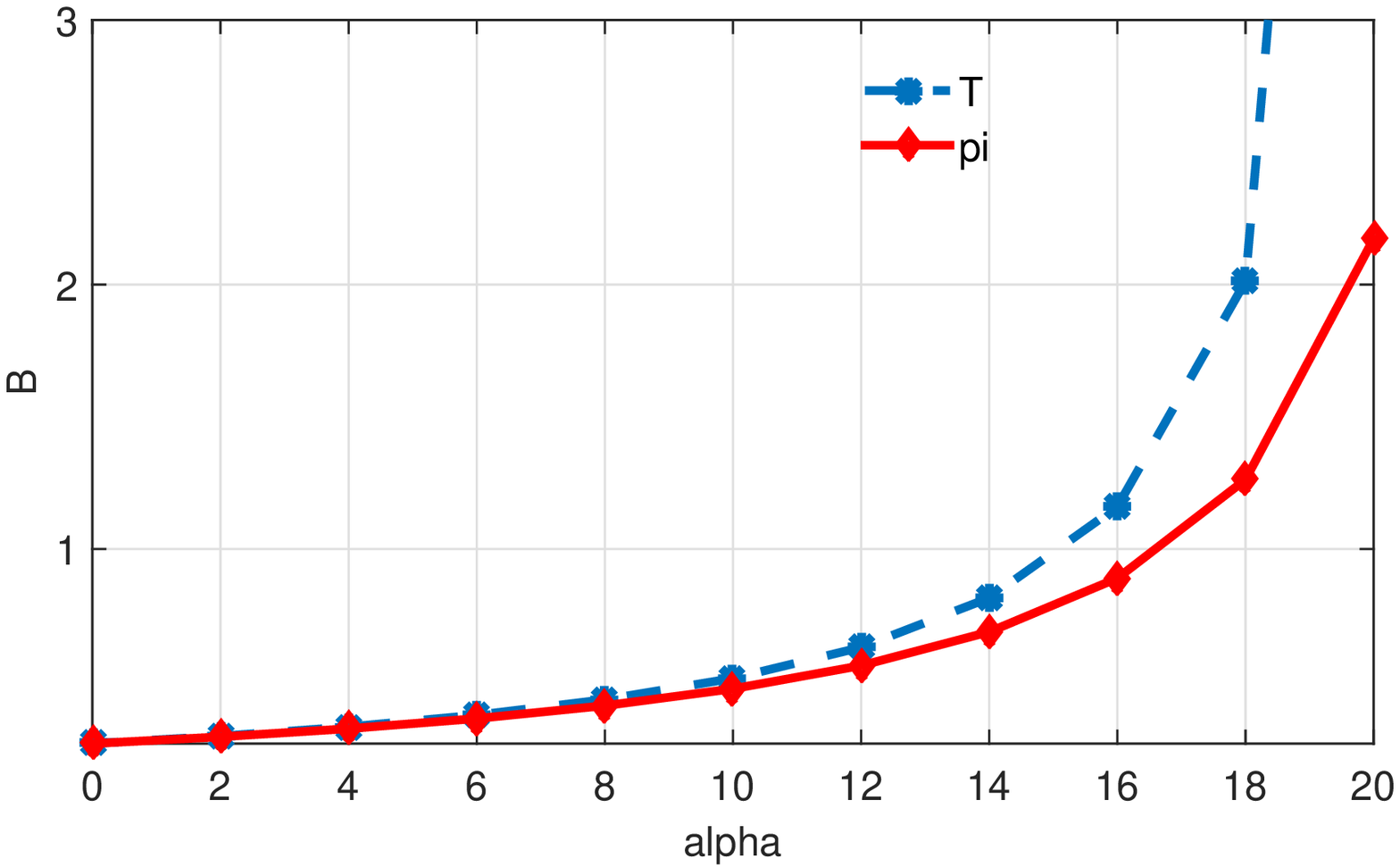}}\\
\subfloat[linear system: $\xi(X)=0.0188$]{\includegraphics[width=8cm,height=4.5cm]{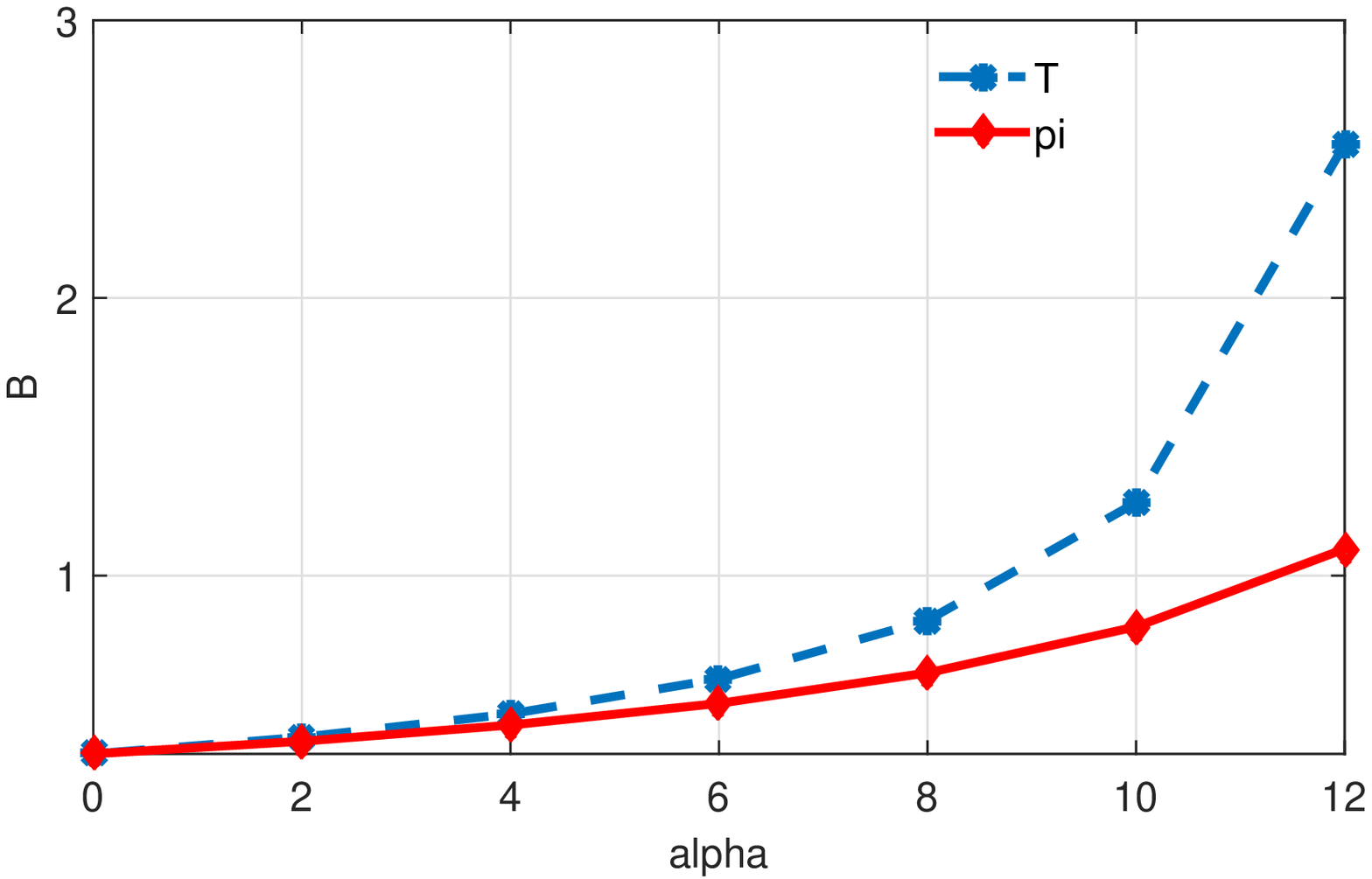}}
\subfloat[nonlinear system: $\xi(X)=0.0107$]{\includegraphics[width=8cm,height=4.5cm]{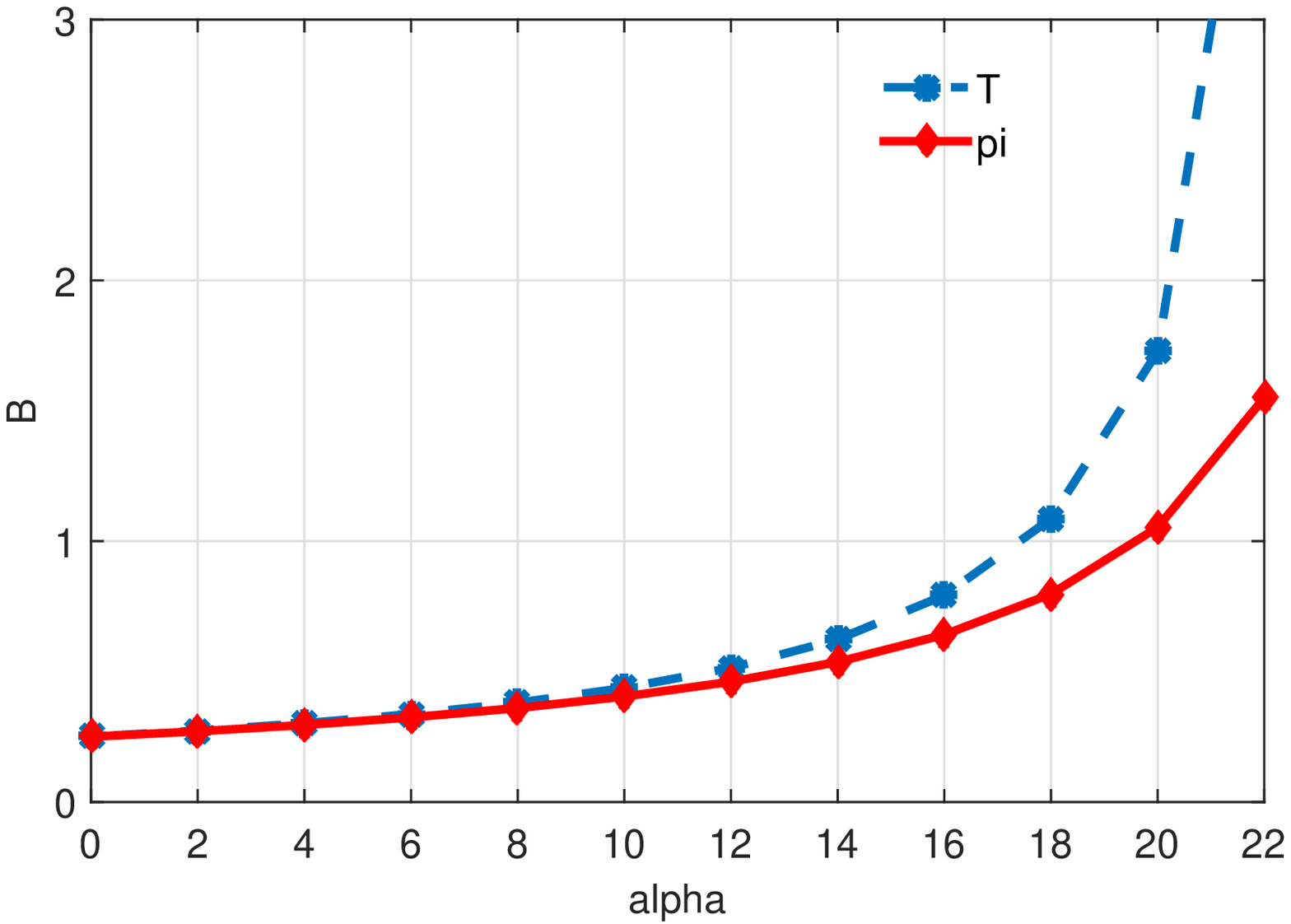}}
\caption{An overestimate of $\mathscr{B}_\varphi$ using respectively $\pi_{\varphi}^c(X)$ and $T(\xi(X))$ for a data matrix $X\in \Re^{2\times 200}$: (a) static data sampled from a Gaussian distribution; (b) data  generated by a switched system; (c)  data  generated by a linear dynamic system ; (d) data  generated by a dynamic nonlinear system. In each case, the x-axis is limited to the range of nonzero gross errors proportions which statisfy the stability condition $\left|S^c\right|/N<T\big(\xi(X)\big)/N$ (see e.g., Corollary \ref{thm:stability-Psi-LossFunction}).}
\label{fig:Bounds}
\end{figure*}
\section{Numerical illustrations}\label{sec:experiments}
The performance of the estimator $\Psi$ has been extensively tested in some existing papers  in the special case of the LAD (see e.g., \cite{Bako16-Automatica}) . We therefore concentrate here on evaluating numerically an estimate of the gain of the estimator based on  Corollary \ref{thm:stability-Psi-LossFunction} and Remark \ref{rem:bound-evaluation}. The estimation is carried out for the case where $\varphi$ consists in the sum of $2$-norms and $q=2$.   Four different cases are studied: 

\begin{itemize}
	\item[(a)]  Static data: $X\in \Re^{2\times 200}$ is sampled from a \textit{Gaussian distribution} $\mathcal{N}(0,I_2)$ with zero-mean and identity-covariance. 
	\item[(b)] Dynamic data generated by a \textit{switched linear system}: $X\in \Re^{2\times 200}$ is formed with the regressors $(y_{t-1}, u_{t-1})$ generated by a switched linear system composed of 3 subsystems of order $1$. 
	This is a switched ARX system defined by $y_t=a_{\sigma(t)}y_{t-1}+b_{\sigma(t)}u_{t-1}$ with the switching signal $\sigma(t)\in \left\{1,2,3\right\}$ generated from a uniform distribution and input $u_t$ being a white noise with Gaussian distribution; $(a_1,b_1)=(-0.40,-0.15)$, $(a_2,b_2)=(1.55,-2.10)$ and $(a_3,b_3)=(1,-0.65)$.  
	\item[(c)]  Dynamic data generated by a \textit{linear ARX system} defined by $y_t=a_1y_{t-1}+b_1u_{t-1}$ with the $(a_1,b_1)$ defined above in case (b).
	\item[(d)]  Dynamic data generated by a \textit{nonlinear NARX system} defined by $y_t=(y_{t-1}+2.5)/(1+y_{t-1}^2)+u_{t-1}$.
\end{itemize}
Following Remark \ref{rem:normalization}, the columns of all data matrices $X$  have been normalized to unit $2$-norm before being processed.

\noindent Figure \ref{fig:Bounds} plots the obtained estimate of the estimator gain  against the proportion of correctable outliers.
As remarked in Section \ref{sec:special-case}, the gain estimate increases as the proportion of outliers gets larger. But the growth rate of the gain estimate depends on the genericity of the data matrix $X$. The more generic the columns of $X$ are, the smaller the growth rate of the estimation error is when regarded as a function of the proportion of outliers. The experiment confirms also the intuition according to which static data tend to be more generic than data generated by a dynamic system. Among the three cases of dynamic systems, the linear system appears to be the one generating the least generic data.


\section{Conclusions}\label{sec:conclusion}
In this paper we have discussed a somewhat general framework for designing a robust estimator. Given the training  data, the estimator is defined as the minimizing set of a certain performance index applying to the data. We have shown that if the performance function possesses some key properties, then the so-defined estimator will inherit robustness properties. Considering a data set generated by a linear model subject to both sparse and dense noises, we showed that the estimator is insensitive to the sparse noise when this latter is acting alone and provided that the number of its nonzero components is no larger than a certain (computable) threshold. Conditions are proposed for the exact recovery of the true parameter matrix when only the sparse noise is active. When both types of noises affect the measurements we propose computable bounds on the parametric estimation error. By assuming stochasticity of the dense noise sequence, the obtained bounds are probably improvable by exploiting appropriately the statistics of the dense noise. This is a matter than can be investigated in future research. 
\section*{Acknowledgement}
The author is grateful to the Associate Editor and the anonymous reviewers for constructive feedback. 

\IEEEtriggercmd{\enlargethispage{-6.5in}}
\IEEEtriggeratref{12}
\bibliographystyle{IEEEtran}

\end{document}